\documentclass[11pt]{article}
\usepackage[margin=1in]{geometry} 
\usepackage{fullpage}
\usepackage{graphicx}
\usepackage{tikz}
\usetikzlibrary{decorations.pathreplacing,calc}

\usepackage{array}
\usepackage{amsthm}
\usepackage{amsmath}
\usepackage{amssymb}
\usepackage{hyperref}
\usepackage{svg}
\usepackage{cleveref}
\usepackage[normalem]{ulem}

\newcommand{\T}{{\mathcal T}}
\newcommand{\E}{{\mathcal E}}

\newcommand{\tl}{\mathrm{tl}}

\renewcommand{\part}{\mathrm{part}}
\newcommand{\anc}{\mathrm{anc}}
\newcommand{\comp}{\mathrm{comp}}

\usepackage{xcolor}

\newcommand{\lemref}[1]{Lemma~\ref{#1}}

\newtheorem{lemma}{Lemma}
\newtheorem{theorem}{Theorem}
\newtheorem{corollary}{Corollary}

\title{
    Reconstructing Bounded Treelength Graphs with Linearithmic Shortest Path Distance Queries
}

\author{
    Chirag Kaudan\thanks{Oregon State University, \texttt{kaudanc@oregonstate.edu}} \and
    Amir Nayyeri\thanks{Oregon State University, \texttt{amir.nayyeri@oregonstate.edu}}
    \thanks{
        The authors were supported by NSF grants CCF-1941086 and CCF-2311180.
    }  
}

\index{Author, First}
\index{Researcher, Second}

\begin{document}
\thispagestyle{empty}
\maketitle

\begin{abstract}

We consider the following graph reconstruction problem: given an unweighted connected graph $G = (V,E)$ with visible vertex set $V$ and an oracle which takes two vertices $u,v \in V$ and returns the shortest path distance between $u$ and $v$, how many queries are needed to reconstruct $E$?

Specifically, we consider bounded degree $\Delta$ and bounded treelength $\mathrm{tl}$ connected graphs and show that reconstruction can be done in $O_{\Delta,\mathrm{tl}}(n \log n)$ queries with a deterministic algorithm. This result improves over the best known algorithm (deterministic or randomized) for this graph class by a $\log n$ factor and matches the known lower bound for the class of graphs with bounded chordality, which is a subclass of bounded treelength graphs.

\end{abstract}

\section{Introduction}
In the graph reconstruction problem, only the vertex set \( V \) of a hidden graph \( G = (V, E) \) is visible, and the goal is to reconstruct the edge set \( E \) by querying an oracle. A graph reconstruction algorithm aims to discover the edge set with as few adaptively chosen queries as possible. The reconstruction task is considered complete once there is a unique graph consistent with all answers to the queries. Several query models have been studied for graph reconstruction, including edge existence queries, edge counting, effective resistance queries, and shortest path queries~\cite{Reyzin2007, AngluinChen2004, HMMT2018, ABKRS2004, AlonAsodi2005, Bennett2025Graphinference}.

In the Shortest Path (SP) distance query model, we have access to a distance oracle: for any two vertices \( u, v \in V \), a query returns the length of the shortest path between \( u \) and \( v \) in \( G \) and returns $\infty$ if no such path exists. 

The study of network reconstruction using shortest path queries is motivated by applications such as Internet topology reconstruction and evolutionary tree reconstruction~\cite{Hein1989, WSSB1977, Beerliova2005, Dallasta2006, ACKM2009}.

Any (unweighted) graph can trivially be reconstructed using \( \binom{n}{2} \) SP queries, simply by querying every pair of vertices to check whether they are adjacent. Unfortunately, this upper bound is tight even for trees, once we allow unbounded degree~\cite{Reyzin2007}.

Moreover, the bound is tight if one allows disconnected graphs--reconstructing a graph with $n$ vertices and $1$ edge requires $\Omega(n^2)$ queries.
In light of these observations, all non-trivial algorithmic results on graph reconstruction focus on connected bounded degree graphs.

Mathieu et al.~\cite{MathieuZhou2013} \cite{KMZ2018} showed that general bounded-degree graphs can be reconstructed using \( \widetilde{O}_{\Delta}(n^{3/2}) \) expected queries via a randomized algorithm.\footnote{We use $\widetilde{O}(f(n))$ as a shorthand for $O(f(n)\cdot \text{polylog}(n))$.  Also, we use $O_p(\cdot)$ for different paratmeters $p$ to indicate that the hidden constant depends on $p$.}. This is still the best known general upper bound today. It is unknown whether there exists a deterministic algorithm to reconstruct a bounded-degree graph in \(\widetilde{O}_{\Delta}(n^{3/2})\) queries. Mathieu et al.~further described a randomized algorithm that reconstructs chordal graphs using only \( O_\Delta(n \log^3 n) \) queries. Building on this, Bastide and Groenland~\cite{Bastide2024a} strengthened their result by obtaining a reconstruction algorithm for bounded degree \( k \)-chordal graphs with $O_{\Delta,k}(n\log n)$ queries (See also Rong et al.~\cite{Rong2021} for other nearly linear reconstruction algorithms of nearly chordal graphs).\footnote{A graph is \( k \)-chordal if all its induced cycles have length at most \( k \); chordal graphs are \( 3 \)-chordal.}

In a more recent work, Bastide and Groenland~\cite{Bastide2024b} describe a randomized \( O_{\Delta, \tau}(n \log^2 n) \)-query algorithm for reconstructing graphs with treelength at most $\tau$. Note that bounded chordality implies bounded treelength~\cite{journals/dm/DourisboureG07}, but the converse is not true. Thus, their latter result applies to a broader class of graphs, though the algorithm is slightly slower and randomized. 

In this paper, we present a deterministic \( O_{\Delta,\tau}(n \log n) \)-query algorithm for reconstructing graphs of bounded degree $\Delta$ and treelength at most $\tau$--our algorithm is deterministic and more efficient by a factor of $\log n$ compared to Bastide and Groenland~\cite{Bastide2024b}.

\begin{theorem}
\label{thm:reconstruction_algo}
Given a hidden graph \( G \), and an integer $\tau\geq \tl(G)$, there is an algorithm that reconstructs \( G \) using at most 
\(
O(\Delta^{3\tau+2} \cdot n \log n)
\)
SP queries, where $n$ is the number of vertices in $G$ and $\Delta$ is its maximum degree.
\end{theorem}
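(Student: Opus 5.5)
We will reconstruct $G$ by recursive divide and conquer driven by balanced separators, in the spirit of Bastide and Groenland, but with every randomized step replaced by a deterministic one that costs $O_{\Delta,\tau}(n)$ queries per level of recursion. Since the separators are balanced, there are $O(\log n)$ levels, which gives the claimed $O(\Delta^{3\tau+2}\, n\log n)$ total.

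\textbf{Structural tool.} A graph of treelength at most $\tau$ has a tree decomposition in which every bag has diameter at most $\tau$ in $G$. A standard weighted-centroid argument on this decomposition gives a bag $B$ such that every component of $G-B$ contains at most half of any prescribed vertex subset $U$. Because $B$ has diameter at most $\tau$ and $G$ has maximum degree $\Delta$, we get $B\subseteq N^{\tau}[x]$ for any $x\in B$, so $|B|\le \Delta^{\tau+1}$. We may thicken $B$ to a ball $N^{\tau}[x]$, which is still a balanced separator of the same size order. Hence it suffices to locate a single vertex $x$ whose $\tau$-ball separates $U$ in a balanced way.

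\textbf{Algorithm.} We maintain a connected vertex set $S$ (initially $V$), together with a few boundary vertices whose distances are known. The steps are as follows.
\begin{enumerate}
\item Reconstruct a BFS-style layering from a root $r$ with $n$ queries.
\item Find a balanced ball $N^{\tau}[x]$ deterministically.
\item Learn the ball's vertices and edges.
\item Query every vertex of $S$ against every vertex of the ball, at cost $|S|\cdot\Delta^{\tau+1}$.
\item Use the resulting distance profiles to assign each vertex to its component of $S\setminus N^{\tau}[x]$. For connected graphs, two vertices lie in the same component exactly when some path avoids the ball, and this can be decided from the distance vectors to the ball together with the already-known layering.
\item Recurse on each component augmented by its attachment to the ball, so that SP distances inside a component remain recoverable from distances in $G$ (shortest paths leaving a component must pass through the ball, whose distances we know).
\end{enumerate}
Edges are recovered at the leaves, where the sets have constant size. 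Crossing edges are recovered from the distance vectors to the ball, since two vertices are adjacent only if their distance profiles to the separator differ by at most one everywhere, and this leaves only $O_{\Delta,\tau}(1)$ candidates per vertex to check.

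\textbf{Main obstacle.} The hard step is finding the balanced vertex $x$ deterministically with only $O_{\Delta,\tau}(|S|)$ queries; previous work used random sampling here and paid an extra $\log n$. We would attempt a walk. Start at $r$ and repeatedly move to a neighbor (within distance about $\tau$) toward the side containing more than half of $S$. Deciding which side is heavy costs $|S|\cdot\Delta^{O(\tau)}$ queries per step, which is too expensive if the walk is long.

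To fix this, we instead use the layering. On a shortest path from $r$ toward a deep vertex, balance can be tested by binary search over the path, but that again costs $\log$. The cleaner plan is to charge the walk's cost to the vertices that leave the heavy part. Each step is either final or certifies that a constant fraction of the remaining mass is cut off, giving a geometric sum of $O(\Delta^{3\tau+2}|S|)$. The exponent $3\tau+2$ should come from checking balls of radius roughly $3\tau$ around candidate centers. Making this amortization rigorous, namely showing that the heavy component shrinks geometrically along the walk by exploiting that bags have diameter at most $\tau$ and therefore cannot be far from a centroid bag, is where we expect most of the effort to lie.
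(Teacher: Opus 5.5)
\paragraph{The gap.} The missing piece is the step you yourself flag as the main obstacle, and it is the core of the argument, not a technicality. The $n\log n$ bound rests on finding a balanced center $x$ deterministically with $O_{\Delta,\tau}(|S|)$ queries per recursion level, and the proposal gives no way to do this. The amortization you sketch is false as stated. A walk that moves $O(\tau)$ per step removes only $\Delta^{O(\tau)}$ vertices from the heavy side per step, not a constant fraction. On a path (treelength $1$) started at an endpoint, the walk takes $\Theta(n/\tau)$ steps, and by your own estimate each costs $|S|\cdot\Delta^{O(\tau)}$ queries. The fallbacks you mention (binary search along a shortest path, or sampling as in Bastide and Groenland) cost $|S|\log n$ per level even when they work. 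Over $\log n$ levels that gives $O(n\log^2 n)$, the bound you are trying to beat.

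Step 5 is also unsupported: distance profiles to the ball, together with a layering from $r\neq x$, do not determine the components of $S\setminus N^{\tau}[x]$. Take two vertices $u_1,u_2$ outside the ball, each adjacent only to the same boundary vertex $b$ and each starting its own pendant path. All distances you collect in steps 1 to 4 are identical whether or not $u_1u_2$ is an edge, yet $u_1,u_2$ lie in different components in one case and the same component in the other. Finally, the recursive instances are not instances of the original problem, because the oracle still returns $d_G$ rather than distances inside the piece. Steps 1 and 5 would therefore have to be re-established in that setting.

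\paragraph{How the paper proceeds.} The paper never looks for separators of the unknown graph. It reconstructs $G$ layer by layer from a BFS root $s$, using two structural facts:
\begin{itemize}
\item By Dourisboure and Gavoille, every part of the layering tree has diameter at most $\ell=3\tau$, so $|N_G(P)|\le \Delta^{\ell+2}$.
\item Lemma~\ref{lem:ell_Search} shows that two vertices of $L_k$ lie in the same part if and only if they are connected inside $L_k\cup\cdots\cup L_{k+\ell+1}$. Hence $\T_k$ with $k=i-\ell-2$ can be computed from $G_i$ with no queries.
\end{itemize}
For each new vertex $x\in L_i$, a centroid binary search over this already-known tree locates $\anc(x,k)$. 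At each round it queries $x$ against $N_G(P_j)$, and the closest such vertex identifies the side containing $\anc(x,k)$. This takes $\lceil\log_2 n\rceil$ rounds of $\Delta^{\ell+2}$ queries (Lemma~\ref{lem:extend_comp_levels_one_vertex}). The new neighbors of $x$ then lie in $\comp(\anc(x,k))\cap L_{\le i}$, a set of size at most $\Delta^{2\ell+4}$, which is brute-forced.

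So in the paper the $\log n$ is a per-vertex search cost over a tree that is already known, not the depth of a recursion that needs fresh balanced separators. That is the idea that sidesteps your obstacle. The exponent $3\tau+2$ is $\ell+2$ and comes from $|N_G(P_j)|$, not from checking radius-$3\tau$ balls around candidate centers.
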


In particular, our result implies a deterministic \( O_{\Delta,k}(n \log n) \)-query algorithm for reconstructing bounded degree graphs with bounded chordality, matching the optimal result for reconstructing graphs with bounded $k$-chordality~\cite{Bastide2024a}. Note that this theorem assumes that the parameter $\tau$ is given.

\section{Background and notation}
In this paper, all graphs are simple, connected, unweighted and undirected. For a graph $G = (V,E)$ and vertices $u,v \in V(G)$, we use $d_G(u,v)$ to denote the length of the shortest path between $u$ and $v$ in $G$. For a graph $G = (V,E)$ and a vertex $v \in V(G)$, we use $N_G(v)$ to denote the set of all neighbors of $v$ in the graph $G$ and $N_G[v] = N_G(v) \cup \{v\}$, the open and closed neighborhood of $v$ respectively. Analogously, for a graph $G = (V,E)$ and a set $S \subseteq V(G)$, we let $N_G[S] = \{u \in V(G) \mid \exists a \in S, d_G(a,u) \leq  1\}$ and $N_G(S) = N_G[S] \setminus S$ be the closed neighborhood and open neighborhood of $S$ in $G$ respectively. For a graph $G = (V,E)$ and a path $\pi = v_0v_1\dots v_k$ in $G$, we use $\pi[v_i,v_j]$ to denote the subpath of $\pi$ starting at $v_i$ and ending at $v_j$ for $0 \leq i \leq j \leq k$. Further, we denote the concatenation of paths $\pi = \pi_0, \dots, \pi_n$ and $\rho = \rho_0, \dots, \rho_m$ where $\pi_n = \rho_0$ in graph $G$ with $\pi \circ \rho = \pi_0, \dots, \pi_n, \dots, \rho_n$ resulting in a $\pi_0 \text{-}\rho_n$ walk in $G$.

\paragraph{Tree decomposition and treelength.}
A \textbf{tree decomposition} of $G = (V,E)$ is an ordered pair $(T,\{V_t \mid t \in V(T)\})$ where $T$ is a tree and $V_t \subseteq V(G)$ is a subset associated with each vertex of $T$. The tree $T$ and the collection of subsets $\{V_t \mid t\in V(T)\}$, which are often referred to as \textbf{bags} of the tree decomposition, must satisfy the following properties.

\begin{itemize}
    \item Every vertex of $G$ is contained in at least one bag $V_t$ i.e.~$V(G) = \bigcup\limits_{t\in V(T)} V_t$. 
    \item For each edge $e$ of $G$, there is a bag $V_t$ which contains both endpoints of $e$.
    \item Let $t,t'$ and $t''$ be three vertices of $T$ such that $t'$ lies on the unique path from $t$ to $t''$ in $T$. Then, if a vertex $v$ of $G$ belongs to both $V_{t}$ and $V_{t''}$, it also belongs to $V_{t'}$.
\end{itemize}

The \textbf{treelength} of a graph $G = (V,E)$, denoted $\tl(G)$, is the minimum integer $j$ such that there exists a tree decomposition $(T,\{V_t \mid t \in V(T)\})$ where $d_G(u,v) \leq j$ for all $u,v \in V_t$ for $t \in V(T)$. That is, any two vertices in the same bag of the tree decomposition are at distance at most $j$. 

For example, the treelength of a complete graph is one, as one can place all vertices in a single bag. More generally, the treelength of a chordal graph is one, and of a $k$-chordal graph is $O(k)$.

\paragraph{The layering tree.}
A key tool that our algorithm uses is the notion of layering tree as used by Chepoi and Dragan \cite{Chepoi2000}. A similar notion has appeared in the literature of graph theory, computational geometry, and topology under different names such as merge tree, join tree or split tree (see these references: \cite{CSA2003,WWW2014}).

Let $G= (V, E)$ be a connected graph, and let $s\in V$ be an arbitrary vertex. 
We define Breath First Search (BFS) \textbf{layers} $L_i$ to be the set of vertices at shortest path distance $i$ from $s$, i.e.~$L_i = \{v \mid d_G(s,v) = i\}$.  In particular, $L_0 = \{s\}$.
We use notations, $L_{\leq k} := L_0\cup\ldots\cup L_k$, and $L_{\geq k} := L_k\cup\ldots\cup L_{n-1}$.

For any $i\geq 0$, let $S_i = \{S_i^1, \ldots, S_i^{s_i}\}$ be the set of connected components of $G\setminus L_{\leq i-1}$, noting that $L_{-1} = \emptyset$.  In turn, for any $1\leq j\leq s_i$, let $P_i^j = S_i^j\cap L_i$, the set of vertices in $S_i^j$ that belong to $L_i$.  We call the elements of ${\cal P}_i = \{P_i^1, \ldots, P_i^{s_i}\}$ the \textbf{parts} of $G$ at layer $i$.  It follows that these parts partition $L_i$.  Moreover, the set of all parts (across all layers), ${\cal P} = \bigcup_{0\leq i}{{\cal P}_i}$, partitions $V(G)$.

We define the \textbf{layering tree} ${\cal T} = ({\cal P}, {\cal E})$ of $G$ with respect to $s$ to be the graph whose vertices are the \emph{parts} of $G$.  For two parts $P, P'\in {\cal P}$, $(P, P')\in {\cal E}$ if and only if there exists $u\in P$ and $u'\in P'$ such that $(u,u')\in E(G)$. 
\begin{figure}
    \centering
    \includegraphics[scale = 0.35]{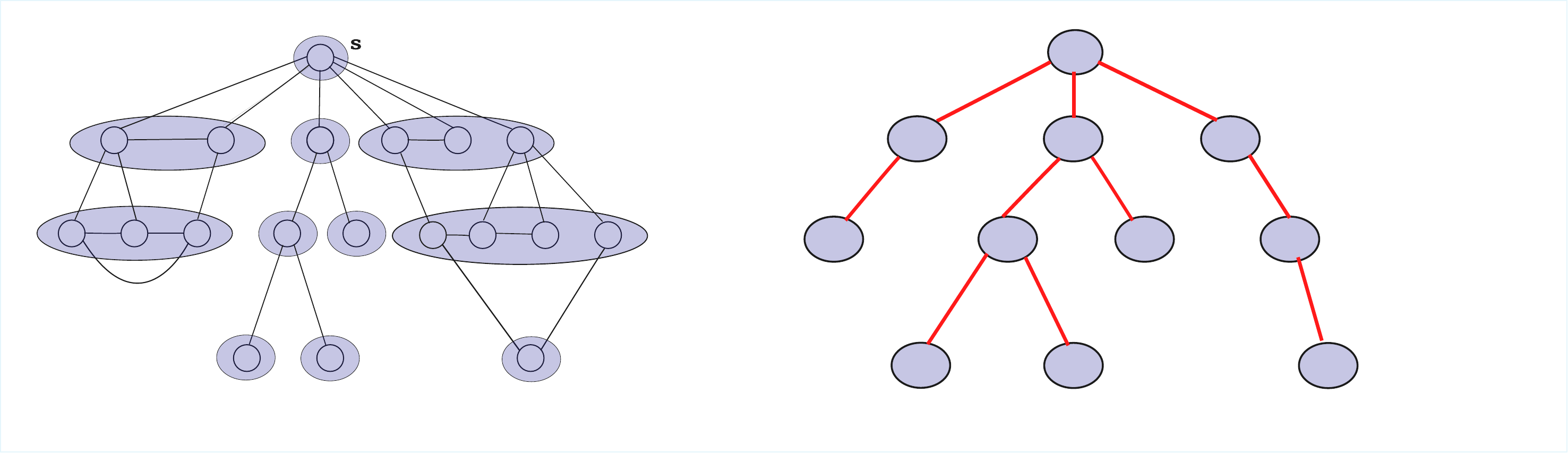}
    \caption{Left: the parts of a graph, shaded in purple, based on the BFS layering done with respect to $s$; Right: the layering tree whose vertices correspond to the parts.}
    \label{fig:LayeringTreeFig}
\end{figure}

In this paper, we refer to parts both as vertices of the layering tree and as subsets of $V(G)$, but it should be clear from the context which one we refer to. Our algorithm needs to consider the subtrees induced by subsets of ${\cal P}$.  We use ${\cal T}_k$ to refer to the induced subtree ${\cal T}[\bigcup_{i \leq k-1} \mathcal{P}_i]$ i.e.~the subtree of $\cal{T}$ induced by the parts at layers $0, 1, \ldots, k-1$, and \emph{not} $k$.

Further we define the \textbf{length} of a layering tree, denoted $\ell({\cal T})$, to be the maximum diameter of any of its parts, i.e.~$\ell({\cal T}) = \max_{P\in{\cal P}}(\max_{u,v\in P}(d_G(u,v)))$.  By a simple counting argument, the size of each part of a layering tree can be bounded.
\begin{lemma}
\label{lem:max_part_size}
Let $G=(V,E)$ be a graph with maximum degree $\Delta$, and let ${\cal T} = ({\cal P}, {\cal E})$ be any of its layering trees.
The size of each part in ${\cal P}$ is at most $\frac{\Delta^{\ell(\T)+1}-1}{\Delta -1} \leq \Delta^{\ell(\T)+1}$
\end{lemma}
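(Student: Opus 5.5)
Fix a part $P \in \mathcal{P}$ and a vertex $u \in P$, and write $\ell = \ell(\T)$. By the definition of the length of a layering tree, every $v \in P$ satisfies $d_G(u,v) \leq \ell$. Hence $P$ is contained in the ball $B(u,\ell) = \{v \in V(G) \mid d_G(u,v) \leq \ell\}$, and it suffices to bound $|B(u,\ell)|$. Note that the distance here is measured in $G$, not inside $P$ or inside its component $S_i^j$; this is exactly what makes the ball argument available, since we only need to count vertices of $G$ near $u$.

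We bound the ball by counting layer by layer from $u$. For $r \geq 0$, let $D_r$ be the set of vertices at distance exactly $r$ from $u$. We claim $|D_r| \leq \Delta^r$, by induction on $r$. The base case is $D_0 = \{u\}$. For $r \geq 1$, every vertex of $D_r$ has a neighbor in $D_{r-1}$ (its predecessor on a shortest path from $u$). So $D_r \subseteq N_G(D_{r-1})$, and since each vertex has at most $\Delta$ neighbors, $|D_r| \leq \Delta\,|D_{r-1}| \leq \Delta^r$.

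Summing over $r$ gives
\[
|P| \leq |B(u,\ell)| = \sum_{r=0}^{\ell} |D_r| \leq \sum_{r=0}^{\ell} \Delta^r = \frac{\Delta^{\ell+1}-1}{\Delta-1},
\]
for $\Delta \geq 2$.

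The final inequality $\frac{\Delta^{\ell+1}-1}{\Delta-1} \leq \Delta^{\ell+1}$ holds because $\Delta - 1 \geq 1$ when $\Delta \geq 2$. The degenerate case $\Delta \leq 1$ needs a separate remark, since the closed form divides by $\Delta - 1$. A connected graph with $\Delta \leq 1$ has at most two vertices. Each such part has size $1$, which still satisfies the bound read as the geometric sum $\sum_{r=0}^{\ell} \Delta^r$. The only real subtlety in the whole argument is this degenerate case and the fact that the diameter is measured in $G$; the counting itself is standard.
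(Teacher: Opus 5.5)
Your proof is correct and is the same ball-counting argument the paper has in mind when it says ``by a simple counting argument'' without writing it out: since diameters are measured in $G$, a part of diameter at most $\ell(\T)$ lies in a ball of radius $\ell(\T)$, which has at most $\sum_{r=0}^{\ell(\T)}\Delta^r$ vertices. Your separate treatment of $\Delta\le 1$ is a harmless refinement the paper ignores; for $\Delta=0$ the literal inequality $\frac{\Delta^{\ell(\T)+1}-1}{\Delta-1}\le\Delta^{\ell(\T)+1}$ actually fails, so reading the bound as the geometric sum is the right way to handle it.
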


Our definition of layering tree is equivalent to the definition used by Dourisbourne and Gavoille~\cite{journals/dm/DourisboureG07}--in their definition, two vertices in $L_i$ belong to the same part if and only if there is a path between them in $G[L_{\geq i}]$.  
They show, in Theorem 8 of their paper~\cite{journals/dm/DourisboureG07}, the following bound on the length of a layering tree of $G$ with respect to its treelength.

\begin{lemma}[Dourisbourne and  Gavoille~\cite{journals/dm/DourisboureG07}, Theorem 8]
\label{lem:diam_layering_tree}
    Let $G=(V,E)$ be a connected graph with treelength $\tl(G)$, and let ${\cal T} = ({\cal P}, {\cal E})$ be its layering tree (with respect to an arbitrary vertex $s$). Then $\ell(\T) \leq 3\tl(G)$.
\end{lemma}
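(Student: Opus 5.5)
The statement is the paper's Lemma~\ref{lem:diam_layering_tree}, which is cited from Dourisboure and Gavoille rather than proved. Still, here is how I would prove it from the definitions. Fix a tree decomposition $(T,\{V_t\})$ of $G$ whose bags all have diameter at most $\tau=\tl(G)$. Take a part $P \in \mathcal{P}_i$ and two vertices $u,v \in P$. The goal is $d_G(u,v)\le 3\tau$. If $i \le \tau$, then $d_G(u,v)\le d(u,s)+d(s,v)=2i \le 2\tau$, so assume $i>\tau$.

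First I would extract a connecting path and a separating bag. Since $u$ and $v$ lie in the same component of $G\setminus L_{\le i-1}$, there is a $u$--$v$ path $Q$ using only vertices of $L_{\ge i}$. Next take shortest paths $\pi_u$ from $s$ to $u$ and $\pi_v$ from $s$ to $v$. The closed walk $\pi_u\circ Q\circ\pi_v^{-1}$ should be used to locate a single bag $V_t$ that meets $Q$ and also meets both $\pi_u$ and $\pi_v$ at vertices far from $s$.

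The standard way to do this is to root $T$ at a node whose bag contains $s$. For each vertex $x$, the nodes whose bags contain $x$ form a subtree $T_x$. Consider the topmost node $t$ of the union of the subtrees $T_x$ over $x\in Q$; this union is connected because $Q$ is connected. One has to check that some such node separates $s$ from all of $Q$, or that $V_t$ itself contains a vertex of $Q$. In either case, each $s$--$Q$ path, in particular $\pi_u$ and $\pi_v$, must intersect $V_t$. The separation property used here is that every path between vertices on different sides of a node $t$ passes through a vertex of $V_t$.

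Then I would do the distance calculation. Let $a\in V_t\cap\pi_u$ and $b\in V_t\cap \pi_v$, and let $c\in V_t\cap Q$, so $d(c,s)\ge i$. Since $d(a,c)\le\tau$, we get $d(s,a)\ge i-\tau$, hence $d(a,u)=i-d(s,a)\le \tau$. Similarly $d(b,v)\le\tau$. Combined with $d(a,b)\le\tau$, this gives $d(u,v)\le d(u,a)+d(a,b)+d(b,v)\le 3\tau$.

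The main obstacle is the choice of the bag $t$. It must simultaneously meet $Q$ and lie on the $s$ side of every vertex of $Q$, so that the shortest paths from $s$ are forced through it. My first attempt would be to take $t$ as the root of the minimal subtree containing all of $\bigcup_{x\in Q}T_x$ in the tree rooted at a bag of $s$. This $t$ lies in some $T_x$ with $x\in Q$, so $V_t$ meets $Q$. If $s\in V_t$, the bound is trivial, since then $d(s,c)\le\tau$ contradicts $i>\tau$. Otherwise, every path from $s$ to $u$ enters the subtree below $t$ and must cross $V_t$.
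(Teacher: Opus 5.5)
Your proof is correct and complete. The paper does not prove this lemma; it imports it as Theorem~8 of Dourisboure and Gavoille. Your argument is essentially the standard one from that source. You find a single bag $V_t$ that both meets the connecting path $Q\subseteq L_{\geq i}$ and separates $s$ from $Q$. The two shortest paths from $s$ then hit $V_t$ at points $a,b$ within distance $\tau$ of $u$ and $v$, which gives $d_G(u,v)\le d(u,a)+d(a,b)+d(b,v)\le 3\tau$.

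Two small points of presentation. First, in your third paragraph the ``or'' should be ``and'': the distance calculation needs $V_t\cap Q\neq\emptyset$ \emph{and} the separation property simultaneously, and your final paragraph does deliver both. Second, the separation step is cleanest stated as follows:
\begin{itemize}
\item $T_s$ contains the root but not $t$, so by connectivity it avoids the whole subtree $T'$ rooted at $t$.
\item $T_u,T_v\subseteq \bigcup_{x\in Q}T_x\subseteq T'$.
\item $V_t$ separates vertices occurring only in bags outside $T'$ from vertices occurring only in bags of $T'$, with $a=u$ allowed when $u\in V_t$.
\end{itemize}
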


Chepoi and Dragan~\cite{Chepoi2000} show that the layering tree of a graph can be computed in linear time.  In this paper, we focus on query complexity rather than time complexity. However, it is not hard to show that the asymptotic time complexity of our algorithm is the same as its query complexity once the maximum degree and treelength are treated as constants.

\section{Reconstruction Algorithm}

Let $G=(V,E)$ be a hidden graph with a visible vertex set of cardinality $n$ and with a shortest path distance oracle.  Our algorithm first selects an arbitrary vertex $s\in V(G)$ and uses $n-1$ queries to find $d_G(s,v)$ for all $v\in V(G) \setminus \{s\}$.  Based on this information, our algorithm can build the BFS layers $L_0, L_1, \ldots, L_{n-1}$.

Next, our algorithm reconstructs $G$ layer by layer.  Specifically, for each vertex $v\in L_i$ our algorithm uses SP queries to find the neighbors of $v$ in $L_i \cup L_{i-1}$.

Let $G_i = G[L_{\leq i-1}]$ be the subgraph of $G$ induced by the first $i$ layers.  
Also, recall that $\T$ is the layering tree rooted at $s$, and let $\ell$ be an upper bound on its length, i.e.~$\ell\geq \ell(\T)$. Similarly, let $\T_k$ be the subtree of $\T$ induced by the first $k = i - \ell -2$ layers of $\T$, the layering tree. 
Our algorithm alternates between extending $G_i$ to $G_{i+1}$ and $\T_k$ to $\T_{k+1}$.  Specifically, given $G_i$, we construct $\T_k$ with no shortest path queries.  Then, given $\T_k$ and $G_i$, we construct $G_{i+1}$ with $|L_i|\cdot \Delta^{O(\ell)}\cdot\log n$ shortest path queries.  Our latter algorithm is composed of a binary search type algorithm in $\T_k$ and an exhaustive search within layers $k$ to $i$, which is similar in spirit to Bastide and Groenland~\cite{Bastide2024a} reconstruction for $k$-chordal graphs. 

\subsection{Extending the layering tree}
We first show that given $G_i$, one can reconstruct the first $k = i - \ell - 2$ layers of $\T$, denoted $\T_{k}$. By the definition of a layering tree, if there exists a path in $G[L_{\geq k}]$ between two vertices $u, v \in L_k$, then $u$ and $v$ belong to the same part at layer $k$; in particular, if there exists a $u\text{-}v$ path within the next $O(\ell)$ layers of $G$, then $u$ and $v$ belong to the same part at layer $k$. The following lemma is a key structural insight for our construction that shows the converse is also true, i.e.,~showing that if two vertices belong to the same part at layer $k$ then there is a path between them within the next $O(\ell)$ layers of $G$. Thus, this lemma implies that to construct $\T_{k}$, we only need to know the first $k + O(\ell)$ layers of $G$.

\begin{lemma}
\label{lem:ell_Search}
    Let $G=(V,E)$, let $\T= ({\cal P}, \E)$ be a layering tree of $G$, and let $\ell \geq \ell(\T)$.  Suppose $u,v\in V(G)$ belong to the same part $P$ of $\T$ at layer $k$.  Then, there is a path between $u$ and $v$ in $G$ whose vertices are in $L_{\leq k+\ell+1} \cap L_{\geq k}$, i.e.~$u$ and $v$ are connected in $G\setminus(L_{\leq k-1}\cup L_{\geq k+\ell+2})$. 
\end{lemma}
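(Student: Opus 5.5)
The idea is to start from any $u$-$v$ path that stays in $L_{\geq k}$ and repeatedly shortcut its deepest excursions, using the fact that every part has diameter at most $\ell$.

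\textbf{Setup.} Since $u,v$ lie in the same part $P$ at layer $k$, they are in the same component of $G\setminus L_{\leq k-1}$. So there is a $u$-$v$ path in $G[L_{\geq k}]$. Among all $u$-$v$ paths in $G[L_{\geq k}]$, pick $\pi$ that minimizes, lexicographically, the pair $(m,c)$. Here $m$ is the largest layer index of a vertex of $\pi$, and $c$ is the number of vertices of $\pi$ in $L_m$. If $m\leq k+\ell+1$ we are done. So assume $m\geq k+\ell+2$; we will construct a path with a smaller pair, contradicting the choice of $\pi$.

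\textbf{Choosing the excursion.} Set $j = m-1-\lfloor \ell/2\rfloor$. Since $m\geq k+\ell+2$, we get $j-\lfloor\ell/2\rfloor \geq m-1-\ell \geq k+1$; in particular $j\geq k+1$. Fix a vertex $w\in\pi\cap L_m$. Let $\sigma=\pi[x_a,x_b]$ be the maximal subpath of $\pi$ that contains $w$ and lies in $L_{\geq j}$.

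Edges of $G$ join vertices whose layers differ by at most one. Also $u,v\in L_k$ and $k<j$. Hence $\sigma$ is a proper interior subpath, the vertices of $\pi$ just outside it lie in $L_{j-1}$, and so $x_a,x_b\in L_j$.

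\textbf{Shortcutting.} The subpath $\sigma$ connects $x_a$ and $x_b$ inside $L_{\geq j}$. Therefore $x_a$ and $x_b$ lie in the same component of $G\setminus L_{\leq j-1}$, that is, in the same part at layer $j$. By the definition of $\ell(\T)$ and $\ell\geq\ell(\T)$, this gives $d_G(x_a,x_b)\leq \ell$.

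Let $\rho$ be a shortest $x_a$-$x_b$ path. Every vertex of $\rho$ is within distance $\lfloor \ell/2\rfloor$ of $x_a$ or of $x_b$, both of which are in $L_j$. So every vertex of $\rho$ lies in layers from $j-\lfloor\ell/2\rfloor\geq k+1$ up to $j+\lfloor \ell/2\rfloor = m-1$.

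Replace $\sigma$ by $\rho$ in $\pi$ to get a $u$-$v$ walk in $G[L_{\geq k}]$, and shortcut any repeated vertices to obtain a path $\pi'$. Every vertex of $\pi'$ is either a vertex of $\pi$ outside $\sigma$ or a vertex of $\rho$. So the maximum layer of $\pi'$ is at most $m$, and $\pi'$ has strictly fewer vertices in $L_m$ than $\pi$, because $w$ is gone and no vertex of $\rho$ is in $L_m$. This contradicts the minimality of $(m,c)$. Hence $m\leq k+\ell+1$, which is exactly the claim.

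\textbf{Main obstacle.} The delicate step is choosing the threshold $j$ so that the shortest path $\rho$ meets two conditions at once. It must stay strictly below layer $m$, so that the measure decreases. It must also stay at or above layer $k$, so that the new path remains in $G[L_{\geq k}]$. Splitting the slack as $\lfloor \ell/2\rfloor$ on each side of $j$, together with $m\geq k+\ell+2$, satisfies both conditions. This is also where the constant $\ell+1$ in the statement comes from.
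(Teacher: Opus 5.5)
Your proof is correct, but it uses a different extremal argument from the paper's.

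\textbf{The paper's route.} The paper takes $\pi$ to be a \emph{shortest} $u$-$v$ path in $G\setminus L_{\leq k-1}$. It cuts the deep excursion at the fixed layer $k+\mu$, where $\mu=\lfloor \ell/2\rfloor$, so the cut is anchored at the bottom layer $k$. It then derives a contradiction from a distance lower bound. The endpoints $x,y$ lie in a common part, so $d_G(x,y)\leq \ell$. Yet any $x$-$y$ path either stays in $L_{\geq k}$ or reaches $L_{k-1}$. In the first case, by minimality of $\pi$, it is at least as long as the excursion, which has length $\geq 2(\ell+2-\mu)$. In the second case it has length $\geq 2(\mu+1)$. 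Together these force $d_G(x,y)\geq \ell+1$.

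\textbf{Your route.} You minimize the potential (deepest layer, number of vertices there). You anchor the cut layer $j=m-1-\lfloor \ell/2\rfloor$ at the deepest layer instead. You then do an explicit surgery: the excursion is replaced by an unrestricted geodesic, which you show is confined to layers $j-\lfloor\ell/2\rfloor$ through $m-1$.

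\textbf{Comparison.} The core fact is the same in both. Excursion endpoints at an intermediate layer share a part. A geodesic of length at most $\ell$ between two vertices of one layer cannot move more than $\lfloor \ell/2\rfloor$ layers away. The paper's version is shorter: one extremal object, one inequality chain, and no bookkeeping for maximal subpaths or for turning walks into paths. Yours is constructive: it gives a procedure that pushes any $u$-$v$ path in $L_{\geq k}$ into the required band, and it does not rely on length-minimality of $\pi$.

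\textbf{A small correction.} Your closing remark about where the constant comes from is slightly off. The bound $j-\lfloor \ell/2\rfloor\geq k+1$ is one more than you need, since $\geq k$ already keeps $\rho$ inside $L_{\geq k}$. So your surgery goes through as soon as $m\geq k+\ell+1$, and the argument actually yields $L_{\leq k+\ell}$. The paper's inequality has the same slack. The $+1$ in the statement is therefore not forced by this step.
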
 

\begin{proof}
    Suppose there is no $u\text{-}v$ path in $G\setminus(L_{\leq k-1}\cup L_{\geq k+\ell+2})$. Since $u,v \in P$, by definition there exists a $u\text{-}v$ path in $G \setminus L_{\leq k-1}$. Let $\pi$ be a shortest such $u\text{-}v$ path. Then $\pi$ has a vertex $t$ in $L_{k+\ell+2}$. 
    Let $\mu = \lfloor\frac{\ell}{2}\rfloor$, and
    let $x$ be the last vertex in $L_{k+\mu}$ on the subpath $\pi[u,t]$ and $y$ be the first vertex in $L_{k+\mu}$ on the subpath $\pi[t,v]$.  
    Note that $x$ and $y$ are both in $ L_{k+\mu}$ and there exists a path $\pi[x,y]$ in $G \setminus L_{\leq k+\mu - 1}$ that is composed of $\pi[x,t]$ and $\pi[t,y]$, both in $G \setminus L_{\leq k+ \mu - 1}$. Therefore, $x$ and $y$ belong to the same part of $\T$. It follows that $d_G(x,y) \leq \ell(\T) \leq \ell$, as the diameter of each part is at most $\ell$. 

    By definition of $\pi$, the path $F:= \pi[x,t] \circ \pi[t,y]$ is a shortest $x\text{-}y$ path whose vertices are in $L_{\geq k}$. Thus, any $x\text{-}y$ path either has length at least the length of $F$ or contains a vertex of layer $L_{k-1}$. That is, $d_G(x,y) \geq \min\{2(\ell+2 - \mu), 2(\mu+1)\} \geq \ell + 1$, a contradiction.
\end{proof}

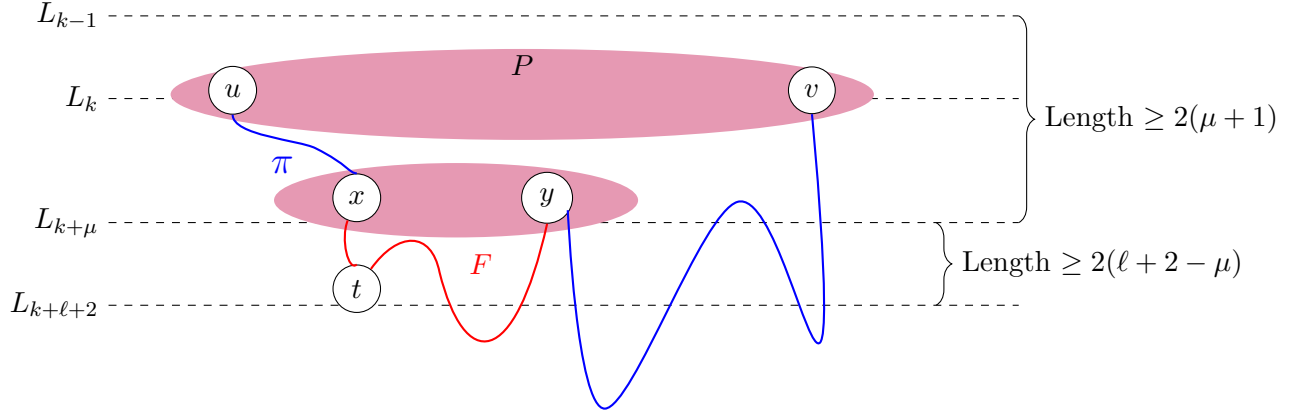
\begin{figure}
    \centering
    \begin{tikzpicture}[scale=1.1]

\draw[dashed] (-1,3.5) -- (10,3.5);
\draw[dashed] (-1,2.5) -- (10,2.5);
\draw[dashed] (-1,1) -- (10,1);
\draw[dashed] (-1,0) -- (10,0);

\node[left] at (-1,3.5) {$L_{k-1}$};
\node[left] at (-1,2.5) {$L_k$};
\node[left] at (-1,1) {$L_{k+\mu}$};
\node[left] at (-1,0) {$L_{k+\ell+2}$};

\fill[purple!40, rounded corners=12pt] (4.0,2.55) ellipse (4.25 and 0.55);
\fill[purple!40, rounded corners=12pt] (3.2,1.27) ellipse (2.2 and 0.45);

\node[circle,draw,fill=white] (u) at (0.5,2.6) {$u$};
\node[circle,draw,fill=white] (x) at (2.0,1.3) {$x$};
\node[circle,draw,fill=white] (y) at (4.3,1.3) {$y$};
\node[circle,draw,fill=white] (v) at (7.5,2.6) {$v$};
\node[circle,draw,fill=white] (t) at (2.0,0.2) {$t$};


\node at (4.0,2.9) {$P$};


\draw[thick, blue]
(u) .. controls (0.5,2.05) and (1.3,2.0) .. (1.5,1.9)
    .. controls (1.9,1.7) and (1.9,1.6) .. (x.north);
\node at (1.1,1.7) {\Large $\mathcolor{blue}\pi$};

\draw[thick, blue]
plot[smooth, tension=0.6] coordinates {
    (4.55,1.15)
    (5.0,-1.25)
    (6.6,1.25)
    (7.6,-0.45)
    (v.south)
};

\draw[thick, red] (x) .. controls (1.8,0.8) and (1.9,0.4) .. (t.north);


    \draw[thick, red]
(t)
  .. controls (2.5,0.9) and (2.9,0.9) .. (3.0,0.4)
  .. controls (3.3,-0.8) and (3.9,-0.8) .. (y.south);
\node at (3.5,0.5) {$\mathcolor{red}F$};
    
\draw[decorate,decoration={brace,amplitude=6pt}]
(10,3.5) -- (10,1)
node[midway,right=6pt] {Length $\ge 2(\mu+1)$};

\draw[decorate,decoration={brace,amplitude=6pt}]
(9,1) -- (9,0)
node[midway,right=6pt] {Length $\ge 2(\ell+2-\mu)$};

\end{tikzpicture}
    \caption{A shortest $u\text{-}v$ path $\pi$ in $G \setminus L_{\leq k-1}$ and its $x \text{-} y$ subpath $F$ (in red) in the proof of \Cref{lem:ell_Search}. Parts are represented by the shaded ellipses. }
    
\end{figure}

We use \lemref{lem:ell_Search} to construct $\T_k$ from $G_i$. Note that to construct $\T_k$, one only needs to be able to check if a pair of vertices $u, v\in L_k$ are connected in the graph $G\setminus L_{\leq k-1}$.  By \lemref{lem:ell_Search}, to check this connectivity we only need to consider $G_i$.
\begin{corollary}
\label{lem:T_from_H}
    Let $G=(V,E)$, and let $\T= ({\cal P}, \E)$ be a layering tree of $G$. For any $i\geq \ell(\T) + 2$ and $k \leq i-\ell(\T)-2$, there is an algorithm to compute $\T_k$ from $G_i$ (and no more information about $G$) with no shortest path queries.
\end{corollary}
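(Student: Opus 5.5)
The plan is to compute the parts of $\T_k$ layer by layer from $G_i = G[L_{\leq i-1}]$, and then recover the tree edges of $\T_k$ from the edges of $G_i$.

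\textbf{Step 1: the layers are known.} The BFS layers $L_0,\dots,L_{i-1}$ are known, since the distances $d_G(s,\cdot)$ were already obtained with the initial $n-1$ queries. Equivalently, they can be read off $G_i$ itself: $G_i$ contains every shortest path from $s$ to a vertex of $L_{\leq i-1}$, so distances from $s$ in $G_i$ coincide with those in $G$ for these vertices.

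\textbf{Step 2: the parts at each layer $j \leq k-1$.} Fix a layer $j \leq k-1$ and two vertices $u,v \in L_j$. I claim that $u$ and $v$ lie in the same part if and only if they are connected in the induced graph
\[
H_j := G[L_{\geq j} \cap L_{\leq j+\ell(\T)+1}].
\]
\begin{itemize}
\item If $u$ and $v$ are connected in $H_j$, they are connected in $G \setminus L_{\leq j-1}$. So they belong to the same component $S_j^{t}$, and hence to the same part $P_j^{t}$.
\item Conversely, if they lie in the same part, \lemref{lem:ell_Search} (applied with $\ell = \ell(\T)$ and layer $j$) gives a $u$-$v$ path inside $L_{\geq j} \cap L_{\leq j+\ell+1}$, i.e., inside $H_j$.
\end{itemize}
It remains to check that $H_j$ can be computed from $G_i$. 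Since $j \leq k-1 \leq i-\ell(\T)-3$, we have $j+\ell(\T)+1 \leq i-2 \leq i-1$. Thus every vertex of $H_j$ lies in $L_{\leq i-1}$, and every edge of $H_j$ joins two vertices of $L_{\leq i-1}$, so it is present in the induced subgraph $G_i$. Consequently $H_j$ is an induced subgraph of $G_i$, fully determined by it. The parts of layer $j$ are therefore the traces on $L_j$ of the connected components of $H_j$, which we compute with a graph search. No oracle queries are used.

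\textbf{Step 3: the edges of $\T_k$.} Two parts $P, P'$ at layers $\leq k-1$ are adjacent in $\T$ if and only if some edge of $G$ joins them. All such edges lie inside $L_{\leq k-1} \subseteq L_{\leq i-1}$, so they are edges of $G_i$. Hence $\E$ restricted to $\bigcup_{j \leq k-1} \mathcal{P}_j$ is read directly from $G_i$, and this gives the induced subtree $\T_k$.

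\textbf{Main obstacle.} The only delicate point is the index bookkeeping: we must ensure that the window of layers required by \lemref{lem:ell_Search} for every layer $j$ of $\T_k$ stays within the layers revealed by $G_i$. This holds with slack, because the last layer of $\T_k$ is $k-1$, and it is exactly why the hypotheses require $i \geq \ell(\T)+2$ and $k \leq i-\ell(\T)-2$. The algorithm itself may use any known upper bound $\ell \geq \ell(\T)$ (for instance $3\tau$, by \lemref{lem:diam_layering_tree}) in place of $\ell(\T)$, since \lemref{lem:ell_Search} holds for every such $\ell$.
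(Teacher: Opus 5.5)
Your proof is correct and takes the same approach as the paper: \lemref{lem:ell_Search} shows that membership in a common part at each layer $j \le k-1$ is witnessed by connectivity inside a window of layers contained in $L_{\le i-1}$, so $G_i$ suffices. You make explicit the per-layer index check and the recovery of the tree edges, both of which the paper leaves implicit. A minor refinement: testing connectivity in $G_i \setminus L_{\le j-1}$ instead of $H_j$ gives the same answer, since it is sandwiched between $H_j$ and $G \setminus L_{\le j-1}$, and it needs no knowledge of $\ell(\T)$ at all. This also avoids a problem with your closing remark: substituting an upper bound $\ell$ for $\ell(\T)$ only works as written when $k \le i-\ell-2$.
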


\subsection{Extending graph reconstruction}

Next, we show we can reconstruct $G_{i+1}$ given $G_{i}$ and $\T_k$ for $k = i - \ell -2$. That is, we can find all the edges that are in $G_{i+1}$ but not in $G_{i}$, namely the edges with both endpoints in $L_{i+1}$ or one endpoint in $L_{i}$ and one in $L_{i+1}$.

Our algorithm takes two steps to achieve this reconstruction.  First, for each vertex $v\in L_{i}$, it finds the connected component of $v$ in $G\setminus L_{\leq k-1}$; more specifically, the part at layer $k$ that is the ancestor (in the partially constructed layering tree $\mathcal{T}_k$) of the part that contains $v$.  This step requires $\Delta^{O(\ell)}\cdot \log n$ shortest path queries.  Once this connected component is discovered, our algorithm exhaustively searches for all neighbors of $v$ within this connected component at layers $L_{i}$ and $L_{i-1}$ to find all its neighbors. This step takes $\Delta^{O(\ell)}$ queries per vertex in $L_i$.

To prepare for a more formal description of our method, we provide a few definitions:
\begin{itemize}
    \item For a vertex $v \in V(G)$, we define $\textbf{part}(v)$ to be the part of the layering tree $\T$ that contains $v$. Note that for any vertex $v \in L_{\leq k-1}$, the vertex of $\mathcal{T}_k$ which contains $v$ is equivalent to $\part(v)$, so we will use $\part(v)$ for such vertices $v$ despite possibly not having reconstructed $\mathcal{T}$. 

    \item For a vertex $v\in L_i$ and an integer $k\in \{0, 1, \ldots, i\}$, we define $\textbf{anc}(v,k)$ to be the part of $\T$ that is the ancestor of $\part(v)$ at depth $k$ e.g.~$\anc(v, i) = \part(v)$, and $\anc(v, 0) = \part(s) = \{s\}$, where $s$ is the root of the BFS tree.

    \item 
        For a part $P$ of $\T$, we define $\textbf{comp}(P)$ to be the subset of vertices contained in $P$ and all of the descendants of $P$ in $\T$. That is, the subset of $V(G)$ that is contained in the subtree of $\T$ rooted at $P$. Note, if $P$ is a part at layer $k$ then $\comp(P)$ is the connected component of $G\setminus L_{\leq k-1}$ that contains the vertices in $P$.
\end{itemize}

\paragraph{Logarithmic search for the connected component.}
Let $v\in L_i$, and let $k\in \{0,1, \dots, i\}$.  The first step of our algorithm is a logarithmic search to find $\anc(v, k)$.  To that end, we use the following classic result on vertex separators of trees. A \textbf{$(1/2)$-cut vertex} or \textbf{centroid} of graph $G$ is a vertex whose removal results in connected components whose size is at most $\frac{1}{2}|V(G)|$. 

\begin{lemma}
\label{lem:half_seperater}
Every tree has a centroid. (see \cite[Ch.~1]{Diestel17})
\end{lemma}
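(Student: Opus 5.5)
The plan is the classical centroid argument. Fix a finite tree $T$ with $m$ vertices. For each vertex $x$, let $f(x)$ denote the size of the largest connected component of $T \setminus \{x\}$. We want a vertex $x$ with $f(x) \le m/2$. Start at an arbitrary vertex $x_0$ and walk through the tree by the following rule. If the current vertex $x$ satisfies $f(x) \le m/2$, stop and output $x$. Otherwise the component $C$ of $T\setminus\{x\}$ with $|C| > m/2$ is unique, since two such components would together exceed $m$ vertices. Because $T$ is a tree, $x$ has exactly one neighbor $y$ in $C$, and the walk moves to $y$.

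The main step is to show that this walk never returns along the edge it just used, and hence terminates. Take the move from $x$ to $y$. The component of $T\setminus\{y\}$ that contains $x$ is exactly $V(T)\setminus C$. Its size is $m-|C| < m/2$, so it is not the large component at $y$. Consequently, if $y$ is not already a centroid, the walk moves from $y$ to a neighbor other than $x$. A walk in a tree that never immediately backtracks is a path, so it never revisits a vertex. Since $T$ is finite, the walk must stop, and it can only stop at a vertex $x$ with $f(x)\le m/2$, which is a centroid.

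The only delicate point is the non-backtracking claim. It depends on two facts: the component of $T\setminus\{y\}$ on the side of $x$ equals $V(T)\setminus C$, and the inequality $|C|>m/2$ is strict. With that claim in place the argument is complete. As an alternative, one could pick a vertex minimizing $f$ and show by the same exchange argument that $f\le m/2$ there. In the paper's application the lemma is applied to subtrees of $\T_k$ to perform a binary search for $\anc(v,k)$, which is why only existence is needed.
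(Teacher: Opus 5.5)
Your proof is correct: the walk toward the unique component of size greater than $m/2$ cannot backtrack, because the component of $T\setminus\{y\}$ containing $x$ is exactly $V(T)\setminus C$, of size $m-|C|<m/2$, so the walk must stop at a centroid. The paper gives no argument of its own and only cites Diestel for this classical fact, and your walk argument is the standard proof of it, so there is nothing further to compare.
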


\begin{lemma}
\label{lem:extend_comp_levels_one_vertex}
    Let $G=(V,E)$, let $\T= ({\cal P}, \E)$ be a layering tree of $G$, and let $\ell \geq \ell(\T)$.  Also, let $i\geq k\geq 0$, and suppose $\T_k$ and $G_{i}$ are known.
    For any $x\in L_i$, one can find $\anc(x, k)$ with $O(\Delta^{\ell+2}\cdot \log n)$ queries. 
\end{lemma}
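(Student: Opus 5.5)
We will locate $\anc(x,k)$ by a centroid-based binary search on the known tree $\T_k$, maintaining a shrinking subtree $\T'$ of $\T_k$ that is guaranteed to contain the part $P^\ast$ lying on the root-to-$\anc(x,k)$ path, and ending once the search is confined to layer $k$. Concretely, the target is the unique part $P$ at layer $k-1$ (or the root) whose subtree contains $\part(x)$, together with the identification of which child of $P$ at layer $k$ is $\anc(x,k)$. Since $\T_k$ contains only parts of layers $\le k-1$, we first find the ancestor $Q=\anc(x,k-1)$ by a search in $\T_k$. We then handle the last step, choosing among the children of $Q$, separately.

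\textbf{Membership test.} For a part $C$ of $\T_k$ at layer $j\le k-1$, we want to decide whether $C$ is an ancestor of $\part(x)$, i.e.\ whether $x\in\comp(C)$. We claim this can be decided with $O(\Delta^{\ell+1})$ queries: $x\in\comp(C)$ iff $d_G(x,C)=i-j$, i.e.\ iff some $c\in C$ has $d_G(x,c)=d_G(s,x)-d_G(s,c)$. For the forward direction, if $x\in\comp(C)$, then the subpath of a shortest $s$–$x$ path starting from its layer-$j$ vertex lies in $L_{\ge j}$ and is connected to $x$. It therefore stays in $\comp(\anc(x,j))=\comp(C)$, and its layer-$j$ vertex lies in $C$. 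Conversely, a path of length exactly $i-j$ from $c\in L_j$ to $x\in L_i$ must increase its layer by one at each step, so it stays in $L_{\ge j}$ and puts $x$ in the component of $c$. The test queries $d_G(x,c)$ for all $c\in C$, which by \lemref{lem:max_part_size} is at most $\Delta^{\ell+1}$ queries.

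\textbf{Search.} Start with $\T'=\T_k$. Take a centroid $C$ of $\T'$ by \lemref{lem:half_seperater} and test it. If $x\in\comp(C)$, the target lies in the subtree of $\T'$ rooted at $C$. We then restrict to that subtree and record $C$ as a confirmed ancestor. The parent side is discarded, and each child component has size at most $|\T'|/2$. To choose among the child subtrees we cannot test every child, since there may be up to $\Delta^{\ell+2}$ of them; instead we test each child of $C$, at a cost of $O(\Delta^{\ell+2}\cdot\Delta^{\ell+1})$. To save a factor, note that computing $d_G(x,c)$ for all $c\in\bigcup\text{children}$ costs at most $\Delta\cdot|C|\le\Delta^{\ell+2}$ queries, since children's vertices are neighbors of $C$ (each layer-$(j{+}1)$ vertex of a child part has a neighbor in $C$). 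If $x\notin\comp(C)$, we keep only the component of $\T'\setminus C$ containing $C$'s parent, which has size at most $|\T'|/2$. Each round uses $O(\Delta^{\ell+2})$ queries and halves $|\T'|\le n$, so $O(\log n)$ rounds suffice to pin down the deepest ancestor, which is $\anc(x,k-1)$ (or $\anc(x,k)$ directly if $\T_k$ already includes layer $k$).

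\textbf{Last layer.} Finally, the parts at layer $k$ below $Q$ are known from $G_i$ via \Cref{lem:T_from_H}, when $k\le i-\ell-2$. We pick the child $R$ of $Q$ at layer $k$ containing some $c$ with $d_G(x,c)=i-k$, using at most $\Delta|Q|\le\Delta^{\ell+2}$ queries. I expect the main obstacle to be keeping each round at $O(\Delta^{\ell+2})$ queries rather than the naive product. I would do this by querying all vertices of $N[C]\cap L_{j+1}$ once and reading off both the test for $C$ and the choice of child, making sure the distance-equality characterization is applied exactly to those vertices.
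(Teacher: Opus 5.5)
Your argument is correct and follows the paper's outline: a centroid search on $\T_k$ with $O(\log n)$ rounds, each querying $d_G(x,\cdot)$ on at most $\Delta|C|\le\Delta^{\ell+2}$ vertices adjacent to the centroid part $C$. What differs is the branching rule.

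The paper queries all of $N_G(P_j)$, parent side included, and moves toward the part containing the vertex nearest to $x$, justified by $P_j$ being a vertex separator of $G$. You instead prove the exact criterion $x\in\comp(C)\iff d_G(x,c)=i-j$ for some $c\in C$, via layer-monotone paths. You then apply it to $C$ and to all its children at once, using that the children's vertex sets are disjoint and contained in $N_G(C)$. Your sentence ``we cannot test every child \dots\ instead we test each child'' is garbled, but the final per-round count of $O(\Delta^{\ell+2})$ is right.

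The paper's rule needs no layer arithmetic, but it only ever selects a neighbouring component. So it cannot confirm the centroid itself as the answer, and must avoid leaf centroids. Your test certifies membership directly, queries only $C$ and its child-side neighbours, and extends to the last layer.

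That extension matters. The paper's proof calls $\anc(x,k)$ a leaf of $\T_k$, silently assuming $\T_k$ contains layer $k$, contrary to its formal definition (layers $0,\dots,k-1$ only). You handle both readings. Under the formal one, however, your final step needs the partition of $L_k$, which $G_i$ provides via Corollary~\ref{lem:T_from_H} only when $i\ge k+\ell+2$. This holds in the sole application, Lemma~\ref{lem:H_from_T} with $i=k+\ell+2$, but not for arbitrary $i\ge k$ as literally stated. So the remaining caveat is a definitional mismatch inherited from the paper, not a flaw in your reasoning.
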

\begin{proof}
Since $\T_k$ is a tree, by \lemref{lem:half_seperater} it contains a $(1/2)$-cut vertex. The proof constructs a sequence of subtrees of $\T_k$, namely $(\T_k^j)^{f}_{j=0}$ and a sequence of $(1/2)$-cut vertices $(P_j)^{f-1}_{j=0}$ for these subtrees such that (1) $f \leq \lceil \log n\rceil$, and (2) $\T_k^f$ has only one vertex: $\mathrm{anc}(x, k)$. Initially, $\T_k^0 = \T_k$. The tree $\T_k^{j+1}$ is defined as the connected component of $T_k^j \setminus \{P_j\}$ that contains $\mathrm{anc}(x, k)$.

Since $P_j \in V(\T_k^j)$ is a separator, $\T_k^{j+1}$ is the subtree of $\T_k^{j}$ that contains the neighbor of $P_j$ which contains the vertex closest to $x$ among all vertices in $N_G(P_j)$, the set of all vertices of $V(G) \setminus P_j$ that have a neighbor in $P_j$.  Therefore, to find $\T_k^{j+1}$ it suffices to query the distance from $x$ to all the vertices in $N_G(P_j)$. Note that $\anc(x,k)$ is a leaf of $\mathcal{T}_k$ and thus we can always avoid choosing it, or any other leaf, as a vertex in the sequence $(P_j)^{f-1}_{j=0}$; this also guarantees that $N_G(P_j) \subseteq G[V(\mathcal{T}_k)]$ for all $P_j$ in the sequence. 

By \lemref{lem:max_part_size} there are at most $\Delta^{\ell+1}$ vertices in $P_j$. Since each vertex in $N_G(P_j)$ has at least one neighbor in $P_j$ and $G$ has maximum degree $\Delta$, we have that \[
|N_G(P_j)|\leq \Delta\cdot |P_j| = \Delta^{\ell+2}
\]
Since $\T_k$ has at most $n$ parts and we eliminate at least half of the parts at each step, we know that this search takes at most $\lceil\log_2 n\rceil$ steps, therefore the total query complexity is
\(
\Delta^{\ell+2} \cdot\lceil\log_2 n\rceil = O(\Delta^{\ell+2}\cdot \log n)
\)
\end{proof}
Based on the lemma above, we can compute the connected component of each $v\in L_i$ in $G\setminus L_{\leq k-1}$. 
\begin{lemma}
\label{lem:extend_comp_levels}
    Let $G=(V,E)$, let $\T= ({\cal P}, \E)$ be a layering tree of $G$ and let $\ell \geq \ell(\T)$.  Also, let $i\geq k\geq 0$, and suppose $\T_k$ and $G_{i}$ are known.
There is an algorithm to compute $\comp(P)\cap L_{\leq i}$ for all $P$ at depth $k$ with $O(|L_i|\cdot \Delta^{\ell+2}\cdot \log n)$ queries.   
   
\end{lemma}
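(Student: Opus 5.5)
The plan is to apply \lemref{lem:extend_comp_levels_one_vertex} to every vertex $x \in L_i$, and to recover the layers $k,\dots,i-1$ of each component from the already known graph $G_i$ with no further queries. The point of the proof is that only the new layer $L_i$ needs to be located through queries.

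First, for each part $P$ at depth $k$ and each $j$ with $k \le j \le i$, I write $\comp(P)\cap L_{\leq i}$ as the disjoint union of the slices $\comp(P)\cap L_j$. For $j < k$ the slice is empty, since $\comp(P)\subseteq L_{\geq k}$. For $k\le j\le i-1$, I claim the slice $\comp(P)\cap L_{\leq i-1}$ can be computed directly from $G_i$ and $\T_k$. A vertex $w \in L_j$ with $k\le j\le i-1$ lies in $\comp(P)$ if and only if it is connected to a vertex of $P$ in $G\setminus L_{\leq k-1}$.

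This connectivity cannot be tested inside $G_i$ alone, because the connecting path may use deeper layers. To handle this, I argue that $w$'s component in $G\setminus L_{\le k-1}$ is determined by the ancestor chain. Take any vertex $w\in L_j$ with $j\ge k$. By the BFS structure, $w$ has a neighbor in $L_{j-1}$, so repeatedly stepping to a lower neighbor gives a path from $w$ down to some $u\in L_k$ that stays in $L_{\ge k}$. This path lies entirely in $G_i$, because all its vertices are in $L_{\le j}\subseteq L_{\le i-1}$. Hence $w\in\comp(\part(u))$, and $u$ and its part $\part(u)$ are known from $\T_k$, since $\T_k$ includes layer $k$ parts as leaves. (Layer $k$ parts are known by \Cref{lem:T_from_H}: $\T_k$ was built with the layer-$k$ partition available.) So each $w\in L_{\le i-1}\cap L_{\ge k}$ is assigned to $\anc(w,k)$ using zero queries.

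For $x\in L_i$, I would proceed the same way if the edges between $L_i$ and $L_{i-1}$ were known. They are not, because $G_i$ only contains layers up to $i-1$. So here I invoke \lemref{lem:extend_comp_levels_one_vertex} to get $\anc(x,k)$ with $O(\Delta^{\ell+2}\log n)$ queries, and I add $x$ to $\comp(\anc(x,k))$. Summing over $L_i$ gives $O(|L_i|\cdot\Delta^{\ell+2}\log n)$ queries in total. This is correct because $\anc(x,k)$ is by definition the depth-$k$ part whose component contains $x$.

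The main subtle point is the zero-query claim for layers below $i$. It relies on two facts: the downward BFS path stays inside $L_{\ge k}$, and membership in $\comp(P)$ is witnessed by any path in $G\setminus L_{\le k-1}$, not necessarily a short one. I would also check the degenerate case $i=k$, where only the query step is used.
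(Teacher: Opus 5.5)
Your proposal is correct and is essentially the paper's proof: vertices in layers $k,\dots,i-1$ are assigned to their depth-$k$ ancestor with zero queries by following a downward BFS path inside $G_i$ to some $u\in L_k$ and reading off $\part(u)$, and each $x\in L_i$ is handled by \lemref{lem:extend_comp_levels_one_vertex}, for $O(|L_i|\cdot\Delta^{\ell+2}\log n)$ queries in total. Like the paper (which also calls $\anc(x,k)$ a leaf of $\T_k$), you need the depth-$k$ parts to be available in $\T_k$, although the background section formally defines $\T_k$ to stop at layer $k-1$; this is a notational inconsistency in the paper, not a gap in your argument.
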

\begin{proof}
To compute $\comp(P)\cap L_{\leq i}$ for all $P$ in $\T$ at depth $k$, it suffices to find $\anc(u, k)$ for all $u\in L_{k}\cup\ldots\cup L_i$. Since we have $G_i$, which is equal to $G[L_{\leq i-1}]$, for each $u\in L_{k}\cup\ldots\cup L_{i-1}$ we have its shortest path to $L_k$ as part of the BFS tree, therefore we can find $\anc(u, k)$ by checking which part at depth $k$ is connected to $u$ in $G\setminus L_{\leq k-1}$.  We do not need any shortest path queries for this step as $G_i$ is given.  For vertices $u\in L_i$, we use \lemref{lem:extend_comp_levels_one_vertex} to find $\anc(u, k)$ with $O(\Delta^{\ell+2}\cdot \log n)$ queries. Since there are $|L_i|$ vertices at layer $i$, the total number of queries is as stated in the lemma.
\end{proof}

\paragraph{Brute force search to find neighbors.} 
So far, we have computed $\comp(P)\cap L_{\leq i}$ for all $P$ at depth $k$.  We bound the size of each of these vertex sets.
\begin{lemma}
\label{lem:comp_set_bound}
    Let $G=(V,E)$, let $\T= ({\mathcal{P}}, \E)$ be a layering tree of $G$, and let $\ell\geq \ell(\T)$. Also, let $P\in\mathcal{P}$ be a part of $\T$ with depth $k$.  Also, let $i\geq k$. 
    Then 
    \(
    |\comp(P)\cap L_{\leq i}|
    \leq \Delta^{\ell+i-k+2}.
    \)
   
\end{lemma}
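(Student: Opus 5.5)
Every vertex of $\comp(P)\cap L_{\leq i}$ lies within distance $i-k$ of some vertex of $P$. So I will count the vertices of $G$ within distance $i-k$ of $P$, and bound $|P|$ using \lemref{lem:max_part_size}.

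\emph{Reduction to a ball around $P$.} Let $w\in\comp(P)\cap L_{\leq i}$. Then $w$ is in the connected component of $G\setminus L_{\leq k-1}$ containing $P$, so $w\in L_j$ for some $k\leq j\leq i$. Take a shortest $s$-$w$ path; it visits the layers $L_0,\dots,L_j$ in order, one vertex per layer. Its final segment from layer $k$ to $w$ lies entirely in $L_{\geq k}$, so it stays inside the component $\comp(P)$. Its vertex $p$ in $L_k$ is therefore in $\comp(P)\cap L_k=P$, and $d_G(p,w)=j-k\leq i-k$. Hence $\comp(P)\cap L_{\leq i}\subseteq N^{i-k}[P]$, the set of vertices at distance at most $i-k$ from $P$.

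\emph{Counting.} By \lemref{lem:max_part_size}, $|P|\leq \frac{\Delta^{\ell+1}-1}{\Delta-1}$. Each vertex has at most $1+\Delta+\dots+\Delta^{r}$ vertices within distance $r$ of it. Multiplying the two geometric sums and using $\frac{\Delta^{a}-1}{\Delta-1}\leq\Delta^{a-1}\cdot\frac{\Delta}{\Delta-1}$ gives roughly $\Delta^{\ell+i-k}\cdot(\frac{\Delta}{\Delta-1})^2\leq 4\Delta^{\ell+i-k}\leq\Delta^{\ell+i-k+2}$ when $\Delta\geq2$. If $\Delta=1$, the graph has at most $2$ vertices and the bound is trivial.

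A cleaner way to organize the count is to use the layered structure: every vertex of $\comp(P)\cap L_{j+1}$ has a neighbour in $\comp(P)\cap L_j$. Therefore $|\comp(P)\cap L_{j+1}|\leq\Delta\,|\comp(P)\cap L_j|$, since a parent has at most $\Delta$ children. This gives $|\comp(P)\cap L_j|\leq\Delta^{j-k}|P|\leq\Delta^{j-k+\ell+1}$. Summing over $j=k,\dots,i$ gives at most $\Delta^{\ell+1}\cdot\frac{\Delta^{i-k+1}-1}{\Delta-1}\leq\Delta^{\ell+i-k+2}$, because $\frac{\Delta^{m}-1}{\Delta-1}\leq\Delta^{m}$.

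The only real obstacle is matching the stated exponent exactly, and the layer-by-layer count above handles it. The main step to verify carefully is that a shortest path from $P$ down to $w$ stays inside $\comp(P)$.
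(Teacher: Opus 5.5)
Your argument is correct and is essentially the paper's proof. Like the paper, you observe that every vertex of $\comp(P)\cap L_{\leq i}$ lies within distance $i-k$ of $P$, and you justify this step more carefully than the paper does. You then combine a ball-size count with $|P|\leq\Delta^{\ell+1}$ from Lemma~\ref{lem:max_part_size}; your layer-by-layer version is a tidier form of the same count.

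One correction to a side remark: the case $\Delta=1$ is not trivial but false. A single edge $st$ with $P=\{s\}$, $k=0$, $i=1$ gives $|\comp(P)\cap L_{\leq 1}|=2>1=\Delta^{\ell+3}$. So the lemma, like the formula in Lemma~\ref{lem:max_part_size}, implicitly assumes $\Delta\geq 2$.
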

\begin{proof}
    Each vertex in $\comp(P)\cap L_{\leq i}$ is at distance at most $i-k$ from a vertex in $P$.  Since the graph has maximum degree $\Delta$, it follows that there are at most $|P|\cdot \Delta^{i-k+1}$ such vertices.  By \lemref{lem:max_part_size}, $|P|\leq \Delta^{\ell+1}$, which implies the bound of the lemma.
\end{proof}

Putting together \lemref{lem:extend_comp_levels} and \lemref{lem:comp_set_bound}, our algorithm constructs $G_{i+1}$ from $G_i$ and $\T_k$.

\begin{lemma}
\label{lem:H_from_T}
    Let $G=(V,E)$, let $\T= ({\mathcal{P}}, \E)$ be a layering tree of $G$, and let $\ell\geq \ell(\T)$. 
    Also, let $k\geq 0$ and $i = k + \ell+2$. Suppose $G_i$ and $\T_k$ are given.  
    There is an algorithm to compute $G_{i+1}$ with $O(|L_i|(\Delta^{\ell+2}\log n + \Delta^{4\ell+8}))$ queries.
\end{lemma}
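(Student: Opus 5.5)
We first apply \lemref{lem:extend_comp_levels} with the given $\T_k$ and $G_i$. For every part $P$ at depth $k$ this yields $\comp(P)\cap L_{\leq i}$, and in particular it tells us, for each $x\in L_i$, the part $\anc(x,k)$. This costs $O(|L_i|\cdot\Delta^{\ell+2}\log n)$ queries, which is the first term of the bound.

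Next we locate the neighbors of each $x\in L_i$ by exhaustive search. The edges of $G_{i+1}$ not already in $G_i$ are those inside $L_i$ and those between $L_{i-1}$ and $L_i$. (With $G_i=G[L_{\leq i-1}]$, the new layer is $L_i$; the text's mention of $L_{i+1}$ is an indexing slip, and we handle exactly the new layer.) Every neighbor $y$ of $x$ lies in $L_{i-1}\cup L_i\cup L_{i+1}$. Only those in $L_{i-1}\cup L_i$ matter here.

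We claim every such neighbor $y$ lies in $\comp(\anc(x,k))$. Since $k\leq i-1$, the vertices $x$ and $y$ are adjacent in $G\setminus L_{\leq k-1}$. Hence they lie in the same connected component of that graph, which is $\comp(\anc(x,k))$. So for each $x\in L_i$ we query $d_G(x,y)$ for every $y\in \comp(\anc(x,k))\cap (L_{i-1}\cup L_i)$, and declare an edge exactly when the answer is $1$. This finds all new edges incident to $x$. Since every new edge has an endpoint in $L_i$, $G_{i+1}$ is determined: we add the edges found to $G_i$.

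For the count, \lemref{lem:comp_set_bound} with $i-k=\ell+2$ gives
\[
|\comp(\anc(x,k))\cap L_{\leq i}|\leq \Delta^{\ell+(\ell+2)+2}=\Delta^{2\ell+4}\leq \Delta^{4\ell+8}.
\]
The total over $x\in L_i$ is therefore $O(|L_i|\,\Delta^{4\ell+8})$ queries. Adding the first step gives the stated bound. The exponent $4\ell+8$ is just a generous slack; it would also cover querying all pairs within a component, $\Delta^{2(2\ell+4)}$, if one prefers to organize the search per component.

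The main point to check is the membership claim, that is, that restricting the search to $\comp(\anc(x,k))$ loses no neighbor. This holds because $y\in L_{\geq i-1}\subseteq L_{\geq k}$ and the edge $xy$ survives the deletion of $L_{\leq k-1}$. The only other thing to confirm is that \lemref{lem:extend_comp_levels} indeed supplies $\comp(P)\cap L_i$ and not only $\comp(P)\cap L_{\leq i-1}$. It does, since it computes $\anc(u,k)$ for every $u\in L_i$.
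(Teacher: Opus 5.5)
Your proposal is correct and follows the paper's proof. Like the paper, you obtain $\anc(x,k)$ for every $x\in L_i$ via \lemref{lem:extend_comp_levels}, then query $x$ against the vertices of $\comp(\anc(x,k))\cap L_{\leq i}$, whose number is bounded by \lemref{lem:comp_set_bound}; your explicit check that every new neighbor of $x$ lies in that component, and your remark that the per-vertex cost is really $\Delta^{2\ell+4}$ (so the exponent $4\ell+8$ is slack), accurately spell out what the paper states tersely.
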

\begin{proof}

By \lemref{lem:extend_comp_levels}, $\comp(P)\cap L_{\leq i}$ can be computed for all $P$ in $\T$ at depth $k$  with $O(|L_i|\cdot \Delta^{\ell+2}\cdot \log n)$ queries. To compute edges between $L_i$ and $L_{i-1}$ and within $L_i$, it suffices to check whether every vertex $v\in L_i$ is adjacent to vertices in the same connected component as $v$ in $G\setminus L_{\leq k-1}$.  To do that we check the set $\comp(\anc(v, k))\cap L_{\leq i}$ for possible neighbors of $v$. By \lemref{lem:comp_set_bound}, these sets are bounded by $\Delta^{\ell+i - k+2} = \Delta^{2\ell+4}$.  Overall, we have at most $|L_i|\cdot \Delta^{4\ell+8}$ queries of the latter type. Hence, the total number of queries is $O(|L_i|(\Delta^{\ell+2}\log n + \Delta^{4\ell+8}))$
\end{proof}

\subsection{Proof of the main result} 

Finally, we are ready to prove our main result.

\begin{proof}[Proof of \Cref{thm:reconstruction_algo}]
    First, we select an arbitrary vertex $s$ in $V(G)$ and we query the distance from $s$ to every $v \in V(G)\setminus\{s\}$ to build the layers $L_0, \ldots, L_{n-1}$; this takes $n-1$ queries. Let $\T$ be the layering tree of $G$ rooted at $\{s\}$ (note, we have not yet computed $\T$).  By \lemref{lem:diam_layering_tree}, we know $\ell(\T) \leq 3\tl(G)$. We set $\ell := 3\tau \geq 3\tl(G)$, where $\tau$ is from the theorem statement. Hence $\ell \geq \ell(\T)$.
    
    We compute $G_{\ell+2}$ by querying every pair of vertices in $L_{\leq \ell+1}$, taking at most $\Delta^{2\ell+4}$ queries. 

    For $i\geq \ell+2$, we compute $G_{i+1}$ from $G_i$ in two steps.  First, we compute $\T_k$ for $k = i -(\ell+2)$, using no queries, by the algorithm of \Cref{lem:T_from_H}.
    Then, by \lemref{lem:H_from_T}, we compute $G_{i+1}$ using at most $|L_i|(\Delta^{\ell+2}\cdot\log n + \Delta^{4\ell+8})$ queries. Summing over all remaining layers, this quantity is at most $n(\Delta^{\ell+2} \cdot \log n + \Delta^{4\ell+8})$. Therefore, we reconstruct $G$ with at most 
    \begin{align*}
        n-1&+\Delta^{2\ell+4} + n(\Delta^{\ell+2} \cdot \log n + \Delta^{4\ell+8}) \\ &= O(\Delta^{\ell+2}\cdot n \log n) = O(\Delta^{3\tau+2}\cdot n \log n)    
    \end{align*}
    queries.
\end{proof}
\bibliography{main}
\bibliographystyle{alpha}
\end{document}